\newcommand{\denselist}{\itemsep 0pt\partopsep 0pt}
\newcommand{\matindex}[1]{\mbox{\scriptsize#1}}
\newtheorem{theorem}{Theorem}
\newtheorem{lemma}{Lemma}
\newcommand\footnoteref[1]{\protected@xdef\@thefnmark{\ref{#1}}\@footnotemark}
\newenvironment{customprob}[1]
  {\innercustomprob}
  {\endinnercustomprob}
\title{Who and When to Screen: Multi-Round Active Screening for Recurrent Infectious Diseases Under Uncertainty}
\author{
  Han-Ching Ou \\
  University of Southern California\\
  Los Angeles, California, USA\\
  \texttt{hanchino@usc.edu} \\
   \And
  Arunesh Sinha \\
  University of Michigan\\
  Ann-Arbor, Michigan, USA\\
   \And
  Sze-Chuan Suen, \\ 
  \textbf{Andrew Perrault}, \\
  \textbf{Milind Tambe} \\
  University of Southern California\\
  Los Angeles, California, USA\\
}
\begin{document}
\maketitle
\begin{abstract}

Controlling recurrent infectious diseases is a vital yet complicated problem.
In this paper, we propose a novel active screening model (ACTS) and algorithms to facilitate active screening for recurrent diseases (no permanent immunity) under infection uncertainty. Our contributions are: (1) A new approach to modeling multi-round network-based screening/contact tracing under uncertainty, which is a common real-life practice in a variety of diseases~\cite{eames2003contact,world2013systematic}; (2) Two novel algorithms, \textsc{Full-} and \textsc{Fast-REMEDY}. \textsc{Full-REMEDY} considers the effect of future actions and finds a policy that provides high solution quality, where \textsc{Fast-REMEDY} scales linearly in the size of the network; (3) We evaluate \textsc{Full-} and \textsc{Fast-REMEDY} on several real-world datasets which emulate human contact and find that they control diseases better than the baselines. To the best of our knowledge, this is the first work on multi-round active screening with uncertainty for diseases with no permanent immunity.

\end{abstract}

%

\keywords{Active screening; social networks; optimization}  

\maketitle


\section{Introduction}
Contagious diseases, such as influenza, gonorrhea, and chlamydia are critical public-health challenges that continue to threaten lives and economic productivity. While low-cost treatment programs are available, individuals may ignore symptoms and delay care, increasing transmission risk. Public health agencies may therefore engage in active screening or contact tracing efforts, where individuals in the community are asked to undergo diagnostic tests and are offered treatment if tests return positive results~\cite{eames2003contact,jama1984}. 

However, in many settings, active screening/contact tracing is expensive and time consuming. 
Even in the United States, budget cuts in 52\% of states and STD programs in 2012 has impacted the quality and quantity of the contact tracing~\cite{braxton2017sexually}. Efficiently identifying infectious cases is therefore of vital importance.

Our \emph{first contribution} is a model of the active screening problem (ACTS). We focus on recurrent infectious diseases that assumes contact to be in structured networks with the SIS model of transmission~\cite{wang2003epidemic}, which is applicable for a wide range of diseases such as pertussis, syphilis and typhoid. The SIS model is the foundation of more complex models that capture more disease dynamics (such as latent states, variation in birth/death rates, or multiple treatment states). In the SIS model, individuals can be either susceptible (S) (currently healthy, but may become exposed) or infected (I). We consider diseases for which there is no means to achieve permanent immunity and prove the ACTS problem to be NP-hard. We assume that health workers are uncertain about the health state of individuals, have a small budget relative to population size for active screening and must engage in active screening over multiple rounds (time periods). To the best of our knowledge, no other models consider multi-round active screening for network SIS diseases in the AI literature.


Our \emph{second contribution} is a novel algorithm---\textbf{RE}current screening \textbf{M}ulti-round \textbf{E}fficient \textbf{DY}namic algorithm (\textbf{REMEDY})---to guide scalable active screening. We develop two versions of the algorithm, \textsc{Full-} and \textsc{Fast-REMEDY}. In \textsc{Full-REMEDY}, we consider both current and future actions simultaneously to understand the underlying disease dynamics and uncertainty of individuals' health states. \textsc{Fast-REMEDY} reduces the time complexity to scale to very large networks by exploiting eigendecomposition techniques. We illustrate the benefits of \textsc{Full-} and \textsc{Fast-REMEDY} via extensive testing on a set of real-world human contact networks against various baselines across a range of realistic disease parameters.

The paper is structured as follows. In Section~\ref{sec:diseasemodel}, we provide background on disease modeling and active screening. In Section~\ref{sec:acts}, we formalize the active screening problem (ACTS) and prove that it is NP-hard. In Section~\ref{sec:alg}, we present REMEDY, our novel algorithm for ACTS. In Section~\ref{sec:experiments}, we empirically analyze REMEDY and compare its performance to relevant baselines.

\section{Disease Model and Background}
\label{sec:diseasemodel}

We first introduce the disease model notation for our problem. In an SIS model~\cite{anderson1992infectious,bailey1975mathematical}, an individual can either be in state $S$ (a healthy individual \textit{susceptible} to disease) or $I$ (the individual is \textit{infected}). 
SIS models recurrent diseases, where permanent immunity is not possible (e.g., TB, typhoid) and not diseases such as Hepatitis A and measles, which follow a SIR or SEIR pattern where treated individuals may achieve permanent immunity by entering $R$ state.

\subsection{Disease Model}
We adopt a discrete time SIS model for modeling the disease dynamics, which was earlier considered by Wang et al.~\cite{wang2003epidemic}. Given a contact network $G(V,E)$, infection spreads via the edges in the network. There are $|V|$ individuals, and let $\delta(v)$ denote neighbors of node $v$ in the network. Each individual (node) $v$ in the network (at time t) is in state ${\bf s}_v(t)\in\{S, I\}$. Let ${\bf t}_v(t)$ denote the state vector that represents the true state of node $v$ at time $t$ where $S$ is represented as $[1,0]^{\top}$ and $I$ as $[0,1]^{\top}$. Given the initial state, a infected node infects its healthy neighbors with rate $\alpha$ independently and recovers with probability $c$. The health state transition probabilities of a node is given by $P  \left[s_v(t+1)=\{S, I\} \right]=\mathbf{T}^N_v(t) {\bf t_v}(t)$ where
\begin{gather}
\mathbf{T}^N_v(t) = \begin{blockarray}{ccc}
  & \matindex{$S$} & \matindex{$I$}\\
    \begin{block}{c[cc]}
      \matindex{$S$} & 1-q_v & c \\
      \matindex{$I$} & q_v & 1-c \\
    \end{block}
  \end{blockarray} \; ,
\end{gather}
where $q_v = 1-(1-\alpha)^{\lvert \{u \in \delta(v) ~|~ {\bf s_u}(t) = I\}\rvert}$. Note that the columns denote the state of $v$ at time $t$ and the rows denote the state at $t+1$. The transition probabilities follow the disease dynamics described earlier. In particular, $q_v$ captures the exact probability that node $v$ becomes infected from its infected neighbors $\{u \in \delta(v) ~|~ {\bf s_u}(t) = I\}$ and $c$ captures the probability that $I$ individuals recover without active screening.

Given such transition probabilities and an initial state, if no intervention happens, the network state evolves by flipping biased coins for each node to determine their next true state in each time step. The process is repeated until the terminal step $T$ is reached.

\subsection{Prior Approaches to Active Screening}

Most previous work on active screening deals primarily with SIR or SEIR type diseases, often referred to as the \textit{Vaccination Problem}~\cite{contact1,contact2,wangthesis,dava,ganesh2005effect}, where permanent immunization (entry into $R$ state) can be viewed as removing nodes from the graph~\cite{eigen1,eigen3}.
Exploiting this idea, Saha et al.~\cite{eigen3} and Tong et al.~\cite{eigen1} focus on immunization ahead of an epidemic and suggest a heuristic method of removing a set of $k$ nodes based on the eigenvalues of the adjacency matrix. Zhang and Prakash~\cite{dava} consider the problem of selecting the best $k$ nodes to immunize in a network after the disease has started to spread. These methods assume that the exact status of each node is known and deal with a single round of screening that offers permanent immunity. 

However, for diseases in which there is no permanent immunity, one-time screening (cure) is not enough and, further, it may not be reasonable to quarantine patients until the disease has died out. We focus on a multi-round screening of SIS diseases that cannot be permanently cured. To the best of our knowledge, this complex setting has not been studied previously. Generally, the problem of minimizing disease spread is different from the well-studied problem of influence maximization~\cite{kempe2000,chen2009}, where one optimizes the selection of seeds or starting nodes for maximizing influence spread, as opposed to optimizing the selection of nodes on which to intervene in order to minimize disease spread.

\section{The Active Screening (ACTS) Problem}\label{sec:acts}
Motivated by active screening/contact tracing campaigns that has been practiced since the 1980s~\cite{jama1984} and applied in various forms/diseases~\cite{ebola,braxton2017sexually}, we propose the Active Screening (ACTS) Problem. Given the SIS model in previous section, an active screening agent seeks to determine the best node sets $C_a(t) \subset V$ to actively screen and cure with a limited budget of $|C_a(t)|\leq k$ at each time step $t$. The agent does not know the the ground truth health state of all individuals. The agent knows the ground truth of the network structure $G(V,E)$, the infection probability $\alpha$ and recovery probability $c$. In addition, the agent observes the \emph{naturally cured} node set $C_n(t)$ at time $t$---because this set of patients come to the clinic voluntarily. The active screening happens after the agent acquires information about $C_n(t)$. Let $C_a(t)$ be the set of nodes that are actively screened at time $t$. A node $v \in C_n(t)\cup C_a(t)$ becomes cured at time $t+1$.
Thus, the transition matrix for a node $v\in C_n(t)\cup C_a(t)$ is $P \left[ {\bf t_v}(t+1)=\{S, I\} \right]=\mathbf{T}^A_v(t) {\bf s_v}(t)$, where
\begin{gather}
\mathbf{T}^A_v(t) = \begin{blockarray}{ccc}
  & \matindex{$S$} & \matindex{$I$}\\
    \begin{block}{c[cc]}
      \matindex{$S$} & 1 & 1 \\
      \matindex{$I$} & 0 & 0 \\
    \end{block}
  \end{blockarray} \; .
\end{gather}
It is worth noting that the action the agent takes at time $t$ does not affect the transition matrix $\mathbf{T}^N_v(t)$ of the nodes not involved in active screening.

Unlike most of the previous work which focused on minimizing the budget spent until the long-term disease eradication is achieved, we focus on maximizing the health quality of each individual at each time step. We treat each individual and time step equally, although it is easy to modify the costs and values to be weighted. The objective of the ACTS problem is:
\begin{align}\label{eq:obj}
\min_{C_a(0),\ldots,C_a(T)} \mathbb{E}\left[\sum\nolimits_{t=0}^{T}\sum\nolimits_{v\in V} | {\bf s_v}(t)=I| \right]. 
\end{align}

\begin{customprob}{Statement}
(ACTS Problem) Given a contact network $G(V,E)$, the disease and active screening model, find an active screening policy such that the expectation of $\sum_{t=0}^{T}\sum_{v\in V} |{\bf s_v}(t)=I|$ is minimized.
\end{customprob}
Even assuming we know the ground truth infected state for each node, ACTS is NP-hard.
\begin{theorem} \label{NPhard}
The ACTS Problem is NP-hard.
\end{theorem}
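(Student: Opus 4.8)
The plan is to prove NP-hardness by a polynomial-time reduction from $k$-Clique (equivalently, Densest $k$-Subgraph), both of which are classically NP-complete. I would reduce the decision version: given a graph $H=(V_H,E_H)$ and an integer $\kappa$, decide whether $H$ has a clique on $\kappa$ vertices; the output will be an ACTS instance together with a threshold $B$, with the property that the optimal value of the ACTS objective \eqref{eq:obj} is at most $B$ iff $H$ has a $\kappa$-clique. To make the stochastic objective purely combinatorial, I would work in a deterministic regime of the SIS dynamics: set $\alpha = 1$, $c = 0$, and take every naturally-cured set $C_n(t)$ empty, so that the disease spreads and persists with certainty and the state at every time is a deterministic function of the chosen screening sets.

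For the construction I would build a three-layer graph $G$. Layer $0$ is a single node $r$, which is the only node infected at time $0$. Layer $1$ is a copy of $V_H$, with every layer-$1$ node joined to $r$. Layer $2$ has one node $z_e$ for each edge $e=\{a,b\}\in E_H$, joined to the two layer-$1$ nodes $a$ and $b$. I would take horizon $T=2$ and screening budget $k = \kappa+1$. Note the graph is layered, which is what keeps the dynamics clean.

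The forced dynamics are then easy to read off. At time $1$, every layer-$1$ node that was not screened at time $0$ is infected (screening $r$ at time $0$ does not prevent this, since in this model an action at time $t$ does not alter a neighbor's transition at step $t$, but that is harmless here), while screened layer-$1$ nodes and all layer-$2$ nodes are susceptible. At time $2$, a layer-$2$ node $z_e$ is infected iff it was not screened at time $1$ and at least one endpoint of $e$ was \emph{not} screened at time $0$; hence ``saving'' $z_e$ requires screening \emph{both} endpoints of $e$ at time $0$. A short exchange argument then shows an optimal policy screens $r$ together with a $\kappa$-subset $A\subseteq V_H$ at time $0$ and, at time $1$, screens $k$ of the (deterministically known) infected nodes, giving objective value $2|V_H| + |E_H| + 1 - 3\kappa - \mathrm{ind}_H(A)$, where $\mathrm{ind}_H(A)$ is the number of edges of $H$ with both endpoints in $A$. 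Minimizing over policies thus amounts to maximizing $\mathrm{ind}_H(A)$ over $\kappa$-subsets $A$, i.e.\ to Densest-$\kappa$-Subgraph, and that maximum equals $\binom{\kappa}{2}$ exactly when $H$ has a $\kappa$-clique. So the ACTS optimum is at most $B := 2|V_H| + |E_H| + 1 - 3\kappa - \binom{\kappa}{2}$ iff $H$ has a $\kappa$-clique.

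I expect the main obstacle to be not the arithmetic but pinning down the form of an optimal policy, i.e.\ the two ``without loss of generality'' claims: (i) the time-$0$ budget is best spent on layer-$1$ nodes (plus $r$), never on layer-$2$ nodes, and is spent in full; and (ii) the time-$1$ budget is best spent simply deleting infected nodes. Establishing these requires a careful case analysis, and the subtlety is the model's quirk that screening a node at time $t$ does not block its already-determined transmission at step $t$; one must check this cannot let a cleverer schedule beat the bound, which it cannot here precisely because $G$ is layered so no ``extra'' spread is available. A minor additional point is a padding condition (ensuring $|V_H|\ge 2\kappa$, so at time $1$ there are at least $k$ infected nodes available to screen), which is handled by padding $H$ with a few extra vertices before the reduction.
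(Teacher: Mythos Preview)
Your proposal is correct and complete: the deterministic regime $\alpha=1$, $c=0$, $T=2$ with a layered graph makes the objective purely combinatorial, and your arithmetic for the Case-1 strategy (screen $r$ and a $\kappa$-subset $A$ at time $0$) checks out, giving objective $2|V_H|+|E_H|+1-3\kappa-\mathrm{ind}_H(A)$. The only nontrivial step is ruling out the alternative of spending all $\kappa+1$ time-$0$ screenings on layer $1$ (leaving $r$ infected); that case yields objective $2|V_H|+|E_H|+1-2\kappa-\mathrm{ind}_H(A')$ with $|A'|=\kappa+1$, and the inequality $\kappa+\max_{|A|=\kappa}\mathrm{ind}_H(A)\ge \max_{|A'|=\kappa+1}\mathrm{ind}_H(A')$ (remove any vertex from the densest $(\kappa{+}1)$-set) closes it. The padding remark handles the budget-saturation corner case.

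The paper takes the same deterministic/short-horizon idea but reduces from \textsc{VertexCover} instead of \textsc{Clique}/\textsc{Densest-$k$-Subgraph}: it builds three vertex-copies $V_0,V_1,V_2$, starts all of $V_0$ infected, matches $V_0$ to $V_1$, and for each edge $\{u,v\}$ of the input adds the cross-edges $u_1\!-\!v_2$ and $v_1\!-\!u_2$; the intended invariant is that screening a vertex cover inside $V_1$ minimizes the spill into $V_2$. Compared to that, your incidence-gadget (one layer-$2$ node per edge, joined to its two endpoints) makes the time-$2$ count literally equal to ``edges not covered by $A$ on both sides'', so the link to the source problem is immediate and the bookkeeping is cleaner; the price you pay is the extra case split on whether to screen the root $r$. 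The paper's construction avoids any root (since $V_0$ plays that role en masse) and so has no such split, but its layer-$2$ counting is more delicate because each $v_2$'s fate depends on the screening status of \emph{all} of $v$'s neighbors rather than just two endpoints.
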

\begin{proof}
We reduce the \textsc{VertexCover} to the decision problem ``Does there exist a curing strategy of objective function equals $5|V|-2k$ with budget of k each round of the constructed ACTS problem?''

Given a \textsc{VertexCover} decision problem with graph $G=(V,E)$ and budget $k$, we construct a new graph $G^*=(V^*_0\cup V^*_1 \cup V^*_2,E^*)$ as follows: First, for each node $v \in V$, create three nodes $v_0$, $v_1$ and $v_2$ in $G^*$. Second, for each node $v \in V$, create an edge $(v_0, v_1)$ in $G^*$. Finally, for each edge $(u,v) \in E$ create two edges, $(u_1,v_2)$ and $(u_2, v_1)$ in $G^*$. We set the parameters of the ACTS problem to be $(\alpha, c)=(1,0)$ and $T=2$ with budget of $k$ in each round. The initial state of the graphs are $s_v(0)=I$ $\forall$ $v \in V^*_0$ and $s_v(0)=S$ $\forall$ $v \in V^*_1 \cup V^*_2$. Fig.~\ref{NPhardG} shows a simple example.
\begin{figure}[ht]
\includegraphics[width=0.30\textwidth,keepaspectratio]{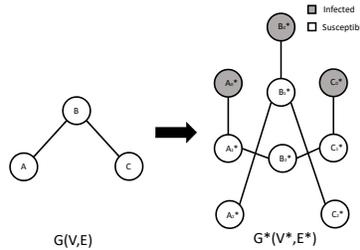}%
\centering
\caption{A simple example of graph transformation for problem deduction.}\label{NPhardG}
\end{figure}

We now argue that $G$ has a vertex cover of size $k$ if and only if the ACTS problem of the above setting has the objective function of $5|V|-2k$. In the above setting, we get to act twice. Acting at $t=0$ allows us to force $k$ nodes into $S$ state at $t=1$. Denote the objective function we get at time $t$ as $Score(t)$, no matter what nodes we chose at $t=0$, our sum of score in the first two rounds is always going to be $Score(0)=|V|$, $Score(1)=2|V|-k$ and for the action we take at $t=1$ will only reduce $Score(2)$ by amount of $k$, as long as we pick nodes in $I$ state since it has no chance to propagate. Thus the only action matters is the action on $t=0$ toward $Score(2)$. Picking the copy of vertex cover set of $G$ in $V^*_1$ results in $|V|+(|V|-k)+k$ of $I$ nodes in $t=2$, which are all the nodes in $V^*_0$, all the nodes in $V^*_1$ except vertex cover copy and the vertex cover in $V^*_2$. We argue that this is the optimal strategy as picking anything that is not vertex cover results more than $k$ infected nodes in $V^*_2$. Then we pick arbitrary $k$ nodes as our action in $t=1$ and results a score of $Score(2)=2|v|-k$. The objective function we gain is $5|V|-2k$ if and only if the vertex cover exist. Thus we have proven the ACTS problem to be NP-hard. 
\end{proof}
\section{Algorithm for the ACTS problem}\label{sec:alg}
We introduce \textsc{REMEDY}, an algorithm for selecting nodes to actively screen in the ACTS problem. \textsc{REMEDY}, shown in Algorithm~\ref{alg:BU}, contains two parts: (i) a marginal belief state update that we use for reasoning about the infected status of nodes, and (ii) an algorithm for selecting which nodes to actively screen based on the marginal belief state and an upper bound of the ACTS objective.
\subsection{Belief State Update}\label{modelsection}
Tracking the exact probability that a node is infected in ACTS requires storing $O(2^{|V|})$ values, which is computationally intractable for reasonably sized graphs. Thus, \textsc{REMEDY} maintains a belief state based on the \emph{marginal} probability that each node is infected, requiring only $O(|V|)$ values. In the action choice algorithm, we form an upper bound on the ACTS objective that accounts for the imprecision of the marginal belief state.


The marginal belief update is lines \ref{line:begin}--\ref{line:end} and \ref{line:begin2}--\ref{line:end2} of Algorithm~\ref{alg:BU}. At each time step $t \in \{0,\ldots,T-1\}$, we acquire perfect information about the infected state of any $I$ node when it recovers without active screening with probability $c$. Otherwise its state remains unknown. This naturally recovered node set $C_n(t)$ is given as nodes such that $s(t)=I$ and $s(t+1)=S$. 

Let $x_v(t) \in \left[0,1 \right]$ be a random variable indicating whether node $v$ is in state $I$ at time $t$ and let ${\bf b_v}(t)=[1-x_v(t),x_v(t)]^{\top}$ be the marginal belief vector. For each node, we update an intermediate belief state ${\bf \bar{b}_{v} }(t)=[1-\bar{x}_v(t),\bar{x}_v(t)]^{\top}$ in which $\bar{x}_v(t)=1$ for $v\in C_n(t)$ and $\bar{x}_v(t)=\frac{(1-c)x_v(t)}{(1-x_v(t))+(1-c)x_v(t)}$ for the remaining nodes $v\in V \setminus C_n(t)$. These update steps are in lines \ref{line:begin}--\ref{line:end} of Algorithm~\ref{alg:BU}. This intermediate belief state is then exploited by the action choice subroutine to select $C_a(t)$, the node set we actively cure (line 8). After that, we calculate the marginal belief state of the next time step as ${\bf b_v}(t+1)=\mathbf{B}^N_v(t) {\bf \bar{b}_v}(t)$ and ${\bf b_v}(t+1)=\mathbf{ B}^A_v(t) {\bf \bar{b}_v}(t)$ for $v\in V \setminus (C_n(t) \cup C_a(t))$ and $v\in C_n(t) \cup C_a(t)$ respectively where
\begin{gather}\label{eq1}
\mathbf{B}^N_v(t) = \begin{blockarray}{ccc}
  & \matindex{$S$} & \matindex{$I$}\\
    \begin{block}{c[cc]}
      \matindex{$S$} & 1-p_v & 0 \\
      \matindex{$I$} & p_v & 1 \\
    \end{block}
  \end{blockarray} \;,
  \mathbf{ B}^A_v(t) = \begin{blockarray}{ccc}
  & \matindex{$S$} & \matindex{$I$}\\
    \begin{block}{c[cc]}
      \matindex{$S$} & 1 & 1 \\
      \matindex{$I$} & 0 & 0 \\
    \end{block}
  \end{blockarray} \; 
\end{gather}
and $p_v = 1-\prod_{u\in \delta(v)} (1-\alpha \bar{x}_u(t) )$. 
These are shown in lines \ref{line:begin2}-\ref{line:end2} of Algorithm~\ref{alg:BU}.
The transition matrix $\mathbf{B}^N$ does not contain parameter $c$ because each node in the $I$ state that did not naturally recover will remain in $I$ state with probability $1$. It is worth noting that, intuitively, to update the marginal belief state for node $v$, one has to calculate the probability of all possible sets of infected neighbors of $v$. However, we show below in Lemma~\ref{prob} that the approach adopted by Eq.~\ref{eq1} to calculate $p_v$ yields the exact probability of $v$ becoming infected by its neighbors given it is currently in $S$, which saves a great amount of computational time.
\begin{lemma} \label{prob}
The exact marginal probabilities of $P[s(t+1)=I|s(t)=S]$ can be calculated by $p_v$ without listing the probability associated with each possible set of infected neighbors.
\end{lemma}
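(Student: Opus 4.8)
The plan is to show that the ostensibly exponential computation --- summing, over all $2^{\lvert\delta(v)\rvert}$ subsets of possibly-infected neighbors, the probability of that subset times the chance it infects $v$ --- collapses to the closed form $p_v$ via a single factorization identity, using only the fact that the intermediate belief state $\bar{\bf b}_v(t)$ represents the neighbors' infected statuses as independent Bernoulli random variables with parameters $\bar x_u(t)$. First I would condition on the (random) set $W=\{u\in\delta(v): s_u(t)=I\}$ of currently infected neighbors of $v$: once $W$ is fixed, the $\lvert W\rvert$ transmission attempts along the active edges are independent coins of success probability $\alpha$, so the probability that $v$ escapes infection is $(1-\alpha)^{\lvert W\rvert}$. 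Hence
\begin{equation}
P[s_v(t+1)=I \mid s_v(t)=S] \;=\; 1 - \mathbb{E}\big[(1-\alpha)^{\lvert W\rvert}\big].
\end{equation}

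Next I would evaluate this expectation by writing it explicitly as a sum over subsets $A\subseteq\delta(v)$, where the belief-state probability that $W=A$ is $\prod_{u\in A}\bar x_u(t)\prod_{u\in\delta(v)\setminus A}(1-\bar x_u(t))$ by the assumed independence. The sum becomes $\sum_{A\subseteq\delta(v)} \prod_{u\in A}\big(\bar x_u(t)(1-\alpha)\big)\prod_{u\in\delta(v)\setminus A}(1-\bar x_u(t))$, which by the elementary identity $\sum_{A\subseteq S}\prod_{u\in A}a_u\prod_{u\in S\setminus A}b_u=\prod_{u\in S}(a_u+b_u)$ equals $\prod_{u\in\delta(v)}\big(\bar x_u(t)(1-\alpha)+(1-\bar x_u(t))\big)=\prod_{u\in\delta(v)}(1-\alpha\,\bar x_u(t))$. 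Substituting back gives exactly $p_v=1-\prod_{u\in\delta(v)}(1-\alpha\,\bar x_u(t))$ as in Eq.~\ref{eq1}, and none of the $2^{\lvert\delta(v)\rvert}$ terms ever had to be enumerated.

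The one point I would be careful to state rather than gloss over is what ``exact'' means here: the belief state by construction factorizes the joint distribution of node states into its marginals, so the independence of the events $\{u\in W\}$ holds \emph{by definition} of $\bar{\bf b}_v(t)$, and the lemma's content is that, relative to this marginal representation, $p_v$ introduces no additional approximation (unlike, say, a mean-field linearization of $q_v$). I do not anticipate a genuine technical obstacle --- the proof is essentially the single algebraic line above plus one sentence pinning down the independence assumption; the only thing to get right is the bookkeeping that matches each subset-term of the expansion to a belief-state probability and to the corresponding escape probability $(1-\alpha)^{\lvert A\rvert}$.
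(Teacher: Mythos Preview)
Your proof is correct, but it takes a different route from the paper's. The paper proves the identity by induction on $|\delta(v)|$: the base case checks $p_{v,1}=\alpha\bar x_u$ directly, and the inductive step adds one new neighbor $w$ via the inclusion--exclusion recursion $p_{v,k+1}=p_{v,k}+\alpha\bar x_w-p_{v,k}\,\alpha\bar x_w$, which then telescopes into the product form. Your argument instead writes $1-P[s_v(t{+}1)=I\mid s_v(t)=S]=\mathbb{E}[(1-\alpha)^{|W|}]$ and collapses the subset sum in one shot using the factorization identity $\sum_{A\subseteq S}\prod_{u\in A}a_u\prod_{u\notin A}b_u=\prod_{u\in S}(a_u+b_u)$. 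Your approach is arguably cleaner and more transparent about where the independence assumption enters (you are explicit that ``exact'' is relative to the factorized belief representation, a point the paper leaves implicit), while the paper's induction has the minor advantage of not invoking a subset-sum identity that some readers might want to see justified. Either way, the mathematical content is the same one-line algebraic fact, and there is no gap in your argument.
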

\begin{proof}
The theorem can be proved by induction. For the base case where there is only one neighbor, the probability that node $v$ is infected in the next time step given it is currently in $S$ is $p_{v,1}=\bar{x}_u(1-(1-\alpha)^1)+(1-\bar{x}_u)(1-(1-\alpha)^0)=\alpha\bar{x}_u$. Assume $p_{v,k}=1-\prod\limits_{u\in \delta(v)} (1-\alpha \bar{x}^{t}_{u} )$ for $|\delta(v)\leq k|$ is true, for $|\delta(v)= k+1|$, where $w$ denotes the newly added neighbor, we have:
\begin{align*}
p_{v,k+1}&=p_{v,k}+\bar{x}_{w}\alpha-p_{v,k}\bar{x}_{w}\alpha \\
&=(1-\bar{x}_{w}\alpha)(1-\prod\limits_{u\in \delta(v)\setminus w} (1-\alpha \bar{x}_{u} ))+\bar{x}_{w}\alpha\\
&=1-\prod\limits_{u\in \delta(v)} (1-\alpha \bar{x}_{u} )
\end{align*}
Thus we proved that $p_v$ evaluates the exact probability of $P[s_v(t+1)=I|s_v(t)=S]$.
\end{proof}

\begin{algorithm}[tb]
\caption{\textsc{REMEDY}}
\label{alg:BU}
\textbf{Input}: $\mathbf{ A}$, ${\bf b}(t)$, $\alpha$, $c$, $C_n(t), t, T, k$\\
\textbf{Output}: $C_a(t)$, ${\bf b}(t+1)$
\begin{algorithmic}[1] 
\FOR{$v \in V$} \label{line:begin}
\IF {$v \in C_n(t)$}
\STATE ${\bf \bar{b}_v}(t)\gets \left[0,1\right]^{\top}$
\ELSE
\STATE ${\bf \bar{b}_v}(t)\gets \frac{\left[(1-x_v(t),(1-c)x(t)\right]^{\top}}{((1-x_v(t))+(1-c)x_v(t))}$
\ENDIF
\ENDFOR \label{line:end}

\STATE $C_a(t) \gets \textsc{ActionChoice}(\mathbf{ A}, {\bf \bar{b}}(t), \alpha, c, C_n(t), t, T, k)$ \label{line:action}

\FOR{$v \in V$}\label{line:begin2}
\IF {$v \in V \setminus C_n(t)\cup C_a(t)$}
\STATE ${\bf b_v}(t+1)\gets \mathbf{ B}^N_v(t) {\bf \bar{b}_v}(t)$
\ELSE
\STATE ${\bf b_v}(t+1) \gets \mathbf{ B}^A_v(t) {\bf \bar{b}_v}(t)$
\ENDIF
\ENDFOR\label{line:end2}
\STATE \textbf{return}  $C_a(t)$, ${\bf b}(t+1)$
\end{algorithmic}
\end{algorithm}

\subsection{Action Choice Algorithm}
We now turn our attention to selecting the set of nodes to actively screen, i.e., line~\ref{line:action} in Algorithm~\ref{alg:BU}.
One fast yet naive approach to this problem is to select the node set with maximum marginal belief to be in $I$ state. This approach can be in $O(|V|)$, but it does not take the network structure and future infection probabilities into account. Another approach is to choose nodes to cure in order to minimize the largest eigenvalue of the network that results from deleting or permanently actively screen the same set of nodes~\cite{thres1}. This approach guarantees that the infection is eradicated in the long term if the largest eigenvalue can be reduced below $\frac{c}{\alpha}$ for sufficient budget $k$. However, it does not take the belief state into account, nor does it consider how many people become infected before the disease is eradicated. These methods are examined in the experiments as our baselines.
We seek to do better by minimizing an upper bound of the  ACTS objective directly.

We develop two different algorithms for action choice: \textsc{Full-ActionChoice} that looks ahead through all future actions and \textsc{Fast-ActionChoice}, a less computationally intensive variant, that considers only the current action, allowing it to exploit eigenvalue decomposition. We refer to \textsc{REMEDY} with \textsc{Full-ActionChoice} as \textsc{Full-REMEDY} and \textsc{REMEDY} with \textsc{Fast-ActionChoice} as \textsc{Fast-REMEDY}. Noted that in both \textsc{Full-REMEDY} and \textsc{Fast-REMEDY}, we change the action based on the observation $C_n(t)$ in each round.

We begin by deriving an upper bound for the ACTS objective starting with some preliminary notation. To encapsulate the effect of active-screening toward our objective function, we define the $|V| \times |V|$ diagonal action matrix $\mathbf{R}_{a}(t)$ at time $t$ as $\mathbf{R}_{a}(t)_{v,v}=1$ if and only if $v \in C_a(t)$ and $0$ otherwise. For the current round, say $t_0$, we observe the nodes that are cured and need to decide the nodes to actively screen. We define the \emph{naturally cured matrix} $\mathbf{R}_{n}(t_0)$ as $\mathbf{R}_{n}(t_0)_{v,v}=1$ if and only if $v \in C_n(t_0)$, which encapsulates the knowledge we gain from natural recovery in the current round.
Let vector ${\bf x}(t)$ represent $x_v(t)$ for all $v$. To bound ${\bf x}(t)$ across all time steps given the actions we take, let $\mathbf{M}'=\alpha \mathbf{A}+\mathbf{I}$, where $\mathbf{A}$ is the adjacency matrix and $\mathbf{I}$ is the identity matrix, define the  \emph{upper bound transition matrix} for the current round as $\mathbf{M}_a(t_0)=(\mathbf{I}-\mathbf{R}_a(t_0)-\mathbf{R}_n(t_0))\mathbf{M}'$, and as $\mathbf{M}_a(t)=(\mathbf{I}-\mathbf{R}_{a}(t))\mathbf{M}$ for future rounds $t > t_0$, where $\mathbf{M}=\alpha \mathbf{A}+(1-c)\mathbf{I}$.
\begin{theorem}  \label{upperbound}
Let the current time be $t_0$. $\mathbf{M}_a$ is defined as above for $t_0$ and $t > t_0$. The ACTS objective (Eq.~\ref{eq:obj}) that the actions affect is bounded above as follows:
\begin{align}
\mathbbm{E}[\sum_{t=t_0}^{T}\sum_{v\in V} |s_v(t)=I|] \leq F(\mathbf{R}_a(t_0),...,\mathbf{R}_a(T)),\\
\mbox{where } F=\mathbbm{1}^{\top}\sum_{t=t_0}^{T} \prod^{t}_{\tau=t_0} \mathbf{M}_a(\tau) {\bf x}(t_0)\\
\mbox{and } \prod^{t}_{\tau=t_0} \mathbf{M}_a(\tau)=\mathbf{M}_a(t)\mathbf{M}_a(t-1)...\mathbf{M}_a(t_0).
\end{align}
\end{theorem}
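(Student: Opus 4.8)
The plan is to pass from the expected number of infected person--time-steps to the vector of per-node marginal infection probabilities, and then to dominate the evolution of that vector by a nonnegative linear recursion whose step operators are exactly the $\mathbf{M}_a(t)$. Write $\nu_v(t)=\Pr[s_v(t)=I\mid\mathcal{F}_{t_0}]$ for the marginal probability that $v$ is infected at time $t$, where $\mathcal{F}_{t_0}$ is the information available when we plan at $t_0$ (in particular it records $C_n(0),\dots,C_n(t_0)$), and collect these into the vector $\boldsymbol{\nu}(t)$, so $\mathbb{E}[\sum_{v\in V}|s_v(t)=I|\mid\mathcal{F}_{t_0}]=\mathbbm{1}^{\top}\boldsymbol{\nu}(t)$. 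A single-step Bayes update gives $\boldsymbol{\nu}(t_0)={\bf x}(t_0)$, the belief that \textsc{ActionChoice} receives in line~\ref{line:action} (the entry $\bar x_v=1$ on $C_n(t_0)$ is exactly $\Pr[s_v(t_0)=I\mid v\in C_n(t_0)]=1$). It therefore suffices to upper bound $\sum_{t=t_0}^{T}\mathbbm{1}^{\top}\boldsymbol{\nu}(t)$.

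The heart of the argument is the entrywise one-step inequality $\boldsymbol{\nu}(t+1)\le\mathbf{M}_a(t)\,\boldsymbol{\nu}(t)$. For a node $v$ actively screened at $t$ (or, when $t=t_0$, lying in the observed set $C_n(t_0)$), $s_v(t+1)=S$ with certainty, so $\nu_v(t+1)=0$; this is precisely what the left diagonal projector $\mathbf{I}-\mathbf{R}_a(t)$, respectively $\mathbf{I}-\mathbf{R}_a(t_0)-\mathbf{R}_n(t_0)$, does to row $v$. For every remaining $v$ we condition on $s_v(t)$. The ``already infected and still infected'' contribution is $\Pr[s_v(t)=I\mid\mathcal{F}_{t_0}]$ times $1-c$ at a future round (the recovery coin is fresh, so it is marginalized) but times $1$ at the current round (conditioning on $v\notin C_n(t_0)$ already tells us an infected $v$ did not recover) --- this is the sole reason $\mathbf{M}=\alpha\mathbf{A}+(1-c)\mathbf{I}$ is used for $t>t_0$ while $\mathbf{M}'=\alpha\mathbf{A}+\mathbf{I}$ is used at $t_0$. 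The ``susceptible and newly infected'' contribution is $\mathbb{E}[q_v\,\mathbbm{1}\{s_v(t)=S\}\mid\mathcal{F}_{t_0}]\le\mathbb{E}[q_v\mid\mathcal{F}_{t_0}]$, and here enters the only genuine relaxation: by Bernoulli's inequality / the union bound, $q_v=1-(1-\alpha)^{\lvert\{u\in\delta(v)\,:\,s_u(t)=I\}\rvert}\le\alpha\sum_{u\in\delta(v)}\mathbbm{1}\{s_u(t)=I\}$, whence $\mathbb{E}[q_v\mid\mathcal{F}_{t_0}]\le\alpha\sum_{u\in\delta(v)}\nu_u(t)=\alpha[\mathbf{A}\boldsymbol{\nu}(t)]_v$ --- note that no independence among the neighbors is needed, only linearity of expectation. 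Adding the two contributions and folding in the screened rows yields $\boldsymbol{\nu}(t+1)\le\mathbf{M}_a(t)\boldsymbol{\nu}(t)$.

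The remainder is mechanical. Since $\alpha\ge0$, $1-c\ge0$, $\mathbf{A}\ge0$ and the projectors are $\{0,1\}$-diagonal, every $\mathbf{M}_a(\tau)$ is entrywise nonnegative, hence monotone on the nonnegative orthant; an induction on $t$ then turns the one-step bound into $\boldsymbol{\nu}(t)\le\big(\prod_{\tau=t_0}^{t}\mathbf{M}_a(\tau)\big){\bf x}(t_0)$, with the product ordered $\mathbf{M}_a(t)\cdots\mathbf{M}_a(t_0)$ as in the statement. Summing over $t=t_0,\dots,T$ and left-multiplying by the nonnegative vector $\mathbbm{1}$ preserves the inequality and gives $\mathbb{E}[\sum_{t=t_0}^{T}\sum_{v\in V}|s_v(t)=I|]=\sum_{t=t_0}^{T}\mathbbm{1}^{\top}\boldsymbol{\nu}(t)\le\mathbbm{1}^{\top}\sum_{t=t_0}^{T}\prod_{\tau=t_0}^{t}\mathbf{M}_a(\tau)\,{\bf x}(t_0)=F$, as claimed.

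I expect the work to be bookkeeping rather than depth. The main delicacy is keeping the ``current round versus future round'' split straight: because $C_n(t_0)$ has already been observed, at $t_0$ one is entitled to the $\mathbf{R}_n(t_0)$ row-kills and the ``$+\mathbf{I}$'' transition, whereas at every later round the recoveries are still random and one must retreat to ``$+(1-c)\mathbf{I}$''; matching this split to the ordering and range of the product $\prod_{\tau=t_0}^{t}$ (and to the outer summation range) is the fiddly part. A secondary point is that in the algorithm ${\bf x}(t_0)$ is not the exact posterior but the \emph{mean-field} belief carried forward by the recursion of Eq.~\ref{eq1}; invoking the standard fact that such marginal approximations over-estimate the true conditional infection probabilities ($\boldsymbol{\nu}(t_0)\le{\bf x}(t_0)$) keeps the bound valid, and if one wishes to avoid that fact entirely the identical one-step-domination argument proves the theorem as a bound on the belief-state surrogate $\sum_{t=t_0}^{T}\mathbbm{1}^{\top}{\bf x}(t)$ instead.
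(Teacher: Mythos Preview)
Your proposal is correct and follows the same strategy as the paper: linearize the per-step infection probability and then iterate the resulting nonnegative linear recursion (the paper routes the linearization through Jensen's inequality followed by a first-order Taylor step, whereas you apply the union bound/Bernoulli inequality directly to the random $q_v$, which is cleaner but lands in the same place). You are also more explicit than the paper about the current-round versus future-round split that distinguishes $\mathbf{M}'$ from $\mathbf{M}$ and about the entrywise nonnegativity of $\mathbf{M}_a(\tau)$ needed to compose the one-step bounds---points the paper either defers to an appendix or leaves implicit.
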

\begin{proof}
\sloppy
Observe that the conditional probability $P\left[{\bf s_v}(t+1)=I|{\bf s_v}(t)=S\right]$ is bounded by
\begin{align*}\label{jenson}
P\left[{\bf s_v}(t+1)=I| {\bf s_v}(t)=S\right]\leq
1-(1-\alpha)^{\sum_{u \in \delta(v)} {x_u}}.
\end{align*}
We show this as follows: $P\left[{\bf s_v}(t+1)=S| {\bf s_v}(t)=S\right] = \sum_{m=0}^{|\delta(v)|} p_m (1 - \alpha)^m$ where $m$ denotes number of infected neighbors of $v$ and $p_m$ is probability of $m$ neighbors infected. Then, $\sum_{m=0}^{|\delta(v)|} p_m (1 - \alpha)^m = \mathbbm{E}[(1 - \alpha)^m] \geq (1 - \alpha)^{\mathbbm{E}[m]} = (1-\alpha)^{\sum_{u \in \delta(v)} {x_u}}$ yields the result by applying Jensen's inequality.
We approximate the right hand side with a first-order Taylor series expansion as $\alpha \sum_{u \in \delta(v)}x_u(t)$, yielding
\begin{align*}
x_v(t+1)&\leq (1-x_v(t)) \alpha \sum_{u \in \delta(v)}x_u(t)+ x_v(t)(1-c).
\end{align*}
Using a vector ${\bf x}(t)$ to represent $x_v(t)$ for all $v$, the above yields the following inequality in vector form:
\begin{align}
{\bf x}(t+1)\leq \mathbf{M} {\bf x}(t)- diag(\alpha \mathbf{A} {\bf x}(t)){\bf x}(t),
\end{align}
where $\mathbf{M}=\alpha \mathbf{A}+(1-c)\mathbf{I}$ and $\mathbf{A}$ is the adjacency matrix. 
We drop the negative term $\textrm{diag}(\alpha \mathbf{A} {\bf x}(t)){\bf x}(t)$, and use $\mathbf{M}{\bf x}(t)$ as an upper bound. 

While the above holds without intervention, we need to use the transition matrix with given interventions $R_a(t_0), \ldots, R_a(T)$ and knowledge of $C_n(t_0)$. We show in the appendix that this matrix is precisely $M_a(t_0)$ for the current time $t_0$ and $M_a(t)$ for $t > t_0$. Then, ${\bf x}(t+1)\leq  \mathbf{M}_a(t) {\bf x}(t)$ for all $t \geq t_0$ and $\mathbbm{E}[\sum_{t=t_0}^{T}\sum_{v\in V} |s_v(t)=I|] =\sum_{t=t_0}^{T}\mathbbm{1}^{\top}{\bf x}(t) $ yields the desired result.
\end{proof}

Given that the function $F$ upper bounds our objective function, we next describe the method we use to select the action matrix $R_a(t)$ that minimize $F$ for every time step. Distinct from previous literature, our objective takes into account the number of infected nodes at each time step. We also have the flexibility to change the action we take based on the observation we make in each round. Such flexibility results in an even larger solution space,  $\binom{n}{k}^T$. Since our problem is NP-hard, we apply a Frank-Wolfe style method to attempt to optimize $F$~\cite{frank1956algorithm}. 

Frank-Wolfe is a gradient-base algorithm runs for some number $L$ steps by starting with some arbitrary feasible points and updates it in two steps: (i)computes the gradient of the objective at the current point (ii)find the point which optimizes the gradient over the feasible set and step toward it. For the first step, we need the gradient of $F$ w.r.t.\ the available action choices. We relax our optimization to a continuous problem by allowing $\mathbf{R}_a(t)_{v,v}$ to take real values between $0$ and $1$ (instead of binary $0,1$), which can be interpret as the probability of choosing node $v$. As a consequence, the feasible solution space is the convex hull of the binary $\mathbf{R}_a(t)$. We denote this convex hull $\Psi$. By taking the derivative of $F$, the gradient w.r.t. action at each time $t$ is
\begin{equation}
\label{gradeq}
\frac{\partial F}{\partial \mathbf{R}_a(t)}=-\sum_{t'=t+1}^{T}\prod^{t+1}_{\tau=t'} \mathbf{M}^{\top}_{a}(\tau)\mathbbm{1} {\bf x}^{\top}(t_0) \prod^{t_0}_{\tau=t-1} \mathbf{M}^{\top}_{a}(\tau),
\end{equation}
The above gradient is a matrix $\Delta(t)$, where the diagonal elements $\Delta(t)_{v,v}$ represents the gradient w.r.t. choice of node $v$ to actively screen at time $t$. Given the gradient and current information, an approximately optimal action for all times in the continuous relaxation can be obtained through a projected gradient descent or a Frank-Wolfe style algorithm, yielding \textsc{Full-ActionChoice} (Algorithm~\ref{alg:FullR}).
\begin{algorithm}[tb]
\caption{\textsc{Full-ActionChoice}}
\label{alg:FullR}
\textbf{Input}: $\mathbf{ A}$, ${{\bf \bar{b}}(t_0)}$, $\alpha$, $c$, $T$, $t_0$, $k$\\
\textbf{Output}: $C_a(t_0)$
\begin{algorithmic}[1] 
\STATE $\mathbf{R}_a^0(t) \gets \mathbf{0} \quad \forall t$
\FOR{$l =1... L$}
\FOR{$t =t_0... T$}
\STATE $\mathbf{\Delta}(t) \gets \textsc{GradientOracle}(\mathbf{R}_a^{l-1})$ \label{gradline}
\STATE $\mathbf{R}_a^*(t) \gets \textsc{ProjectFeasible}(\Delta, k)$ \label{proj}
\STATE $\mathbf{R}_a^l(t) \gets \gamma_l \mathbf{R}_a^{l-1}(t)+(1-\gamma_l) \mathbf{R}_a^*(t)$ \label{step}
\ENDFOR
\ENDFOR
\STATE $C_a(t_0) \gets \arg\max_{k} \mathbf{R}^L_a(t_0)$ 
\STATE \textbf{return} $C_a(t_0)$
\end{algorithmic}
\end{algorithm}

In the \textsc{Full-ActionChoice} algorithm, we first set an arbitrary feasible point in $\Psi$ for each time step, say $\mathbf{R}^0_a(t)=\mathbf{0}$ for iteration $0$. 
In each iteration, first we calculate the gradient for the current feasible point using Eq.~\ref{gradeq} as our \textsc{GradientOracle} in line \ref{gradline}. Second, we project the resultant point toward the current best solution on the gradient hyper-plane for every time step simultaneously. We do so simply by greedily selecting $k$ nodes with largest $\Delta(t)_{v,v}$ as our current best solution $\mathbf{R}^{*}_a(t)$ in line~\ref{proj}. Third, we set the initial point $\mathbf{R}^l_a(t)$ of the next iteration in line~\ref{step}, in which $\gamma_{l}=2/(l+2)$ is the step size of Frank-Wolfe algorithm. Since $\Psi$ is convex and $\mathbf{R}^l_a(t)$ is the convex combination of two feasible points, it is guaranteed that it will remain in the convex hull $\Psi$ by the update. After the iteration finishes, we output our action in the current round by greedily selecting $k$ nodes of the relaxed $\mathbf{R}^L_a(t_0)$ of the final iteration and wait for new information from the next round to arrive.

The \textsc{Full-REMEDY} algorithm considers future actions simultaneously and has time complexity of $O(T^2|V|^{\omega})$, where $\omega$ an exponent arising from matrix multiplication. However, calculating such solutions for a very large network---which is often the case for contact tracing---could be time consuming for low resource regions. To reduce time complexity, we further simplify the upper-bound function by assuming that no actions are taken in the future rounds and ignore their effect on the current decision making to derive  \textsc{Fast-ActionChoice} (Algorithm~\ref{alg:DE}). By ignoring future actions, the action matrix $M_a(t)$ in \textsc{Full-REMEDY} is simplified to constant $M$. The contribution of actively screening each node can be written as the following vector form:
\begin{align}
\mathbbm{1}^{\top}\sum\nolimits_{\tau=0}^{T-t_0-1}\mathbf{M}^{\tau}diag(\mathbf{M}_n {\bf x}(t_0)),
\end{align}
where $\mathbf{M}_{n}=(\mathbf{I}-\mathbf{R}_n(t_0))\mathbf{M}'$. Now, since $\mathbf{M}$ is the same for every future round, $\mathbf{ M}$ can be decomposed as $\mathbf{Q} \mathbf{\Lambda} \mathbf{Q}^{\top}$ ahead of time, where $\mathbf{Q}$ is a matrix comprised of the eigenvectors of $\mathbf{M}$, and $\mathbf{\Lambda}$ a diagonal matrix comprised of the eigenvalues along the diagonal. Such a matrix can be approximated by calculating only the top $m$ largest eigenvalues and their eigenvectors using the Lanczos algorithm (\cite{lanczos1950iteration}) that has a complexity of $O(|E|)$ (assuming the large network is sparse), yielding the \textsc{Fast-ActionChoice} shown in Algorithm \ref{alg:DE}. The approximate $\mathbf{M}$ is given by $\mathbf{Q}_m\mathbf{\Lambda}_m \mathbf{Q}_m^{\top}$, where these matrices are computed in line 3. In line 5, the well-known result $(\mathbf{Q}_m\mathbf{\Lambda}_m \mathbf{Q}_m^{\top})^{\tau} = \mathbf{Q}_m\mathbf{\Lambda}_m^{\tau} \mathbf{Q}_m^{\top}$ is used to approximate $\mathbf{M}^{\tau}$. The time complexity of \textsc{Fast-REMEDY} is $O(|V|^2)$ assuming constant $m$.
\begin{algorithm}[tb]
\caption{\textsc{Fast-ActionChoice}}
\label{alg:DE}
\textbf{Input}: $\mathbf{A}$, ${{\bf \bar{b}}(t_0)}$, $\alpha$, $c$, $T$, $t_0$, $k$\\
\textbf{Output}: $C_a(t_0)$
\begin{algorithmic}[1] 
\IF {$t_0=0$}
\STATE $\mathbf{M} \gets \alpha \mathbf{A}+(1-c)\mathbf{I}$
\STATE $\mathbf{Q}_m,  \mathbf{\Lambda}_m \gets \textsc{Lanczos}(\mathbf{M},m)$
\ENDIF
\STATE $\textbf{Scores} \gets \mathbbm{1}^{\top}\mathbf{Q}_m(\sum_{\tau=0}^{T-t_0-1}\mathbf{\Lambda}_m^{\tau}) \mathbf{Q}_m^{\top} diag(\mathbf{M}_n {\bf x}(t_0))$
\STATE $C_a(t) \gets$ $k$ nodes with highest scores in vector $\textbf{Scores}$
\end{algorithmic}
\end{algorithm}

\section{Experiments}\label{sec:experiments}
We perform experiments comparing \textsc{Fast-} and \textsc{Full-REMEDY} to baselines on a variety of real-world, publicly available datasets. Table~\ref{resultstable} lists all the networks and their properties. Most of the networks were collected in actual human contact settings. The networks used here have varied sizes ($|V|$), average degrees ($d$), assortativities ($\rho_D$), and epidemic thresholds ($1/\lambda_A$).

\begin{table*}
\centering
\scalebox{0.78}{
 \begin{tabular}{l c c c c|cccccc} 
 \toprule
 \multirow{2}{*}{Network} & \multirow{2}{*}{$|V|$} & \multirow{2}{*}{$\frac{1}{\lambda_A^*}$} & \multirow{2}{*}{$d$} & \multirow{2}{*}{$\rho_D$} & \multicolumn{6}{|c}{$\alpha=0.1, c=0.1$} \\ \cmidrule{6-11}
 &  &  & & & Random & Max-Degree & Eigenvalue  & Max-Belief  & Fast-REMEDY & Full-REMEDY \\
 \midrule\midrule
 \textbf{Hospital}~\cite{hospital} & 75 & 0.027 & 15.19 & -0.18 & 144 & 150 & 151  & 150 & \textbf{156} & \textbf{160} \\
 \textbf{India}~\cite{jackson} & 202 & 0.095 & 3.43 & 0.02 & 605 & 470 & 420  & 636 & \textbf{890} & \textbf{901} \\ 
 \textbf{Face-to-face}~\cite{infectious} & 410 & 0.042 & 6.74 & 0.23 & 809 & 843 & 745 &1057& \textbf{1297} & \textbf{1409} \\
 \textbf{Flu}~\cite{stanfordflu} & 788 & 0.003 & 150.12 & 0.05 & 1336 & 1421 & 1431 & 1438 & \textbf{1443} & \textbf{1446}\\
 \textbf{Irvine}~\cite{panzarasa2009patterns} & 1899 & 0.021 & 7.29 & -0.18 & 4630 & 5741 & 3692 &4957& \textbf{6676} & \textbf{7821} \\
 \textbf{Escorts}~\cite{escort} & 16730 & 0.032 & 2.33 & -0.03 & 27400 & 30167 & TLE  & 29493 & \textbf{46549} & TLE\\
 \bottomrule
 \end{tabular}}
\caption{\textbf{Improvement} of the objective function over \textbf{None} (The larger the better). TLE implies time limit of 24 hours for all rounds exceeded. }
\label{resultstable}
\end{table*}

\vspace*{2mm}
\noindent\textbf{Setting.}\hspace{2mm}In all simulations, we assume the budget $k$ allows for screening and treatment of $20\%$ of the total population $|V|=n$ per round. All results are averages over $30$ simulation runs.

\vspace*{2mm}
\noindent\textbf{Setup.}\hspace{2mm}In the real world, active screening is performed only after conducting initial surveys on the prevalence and incidence of the disease. To simulate this, we run our experiments in two stages. 

\vspace*{2mm}
\noindent\textbf{Stage 1 (Survey Stage).}\hspace{2mm}This stage starts at $t=0$ with $25\%$ of individuals in $I$ and ends at $t=10$. No active screening is done and the disease evolves naturally. The initial belief $\bf b(0)$ for all nodes is assumed to be $[0.5, 0.5]^{\top}$ since we have no prior information. However beliefs are updated when individuals come to the clinic voluntarily (with probability $c$). This belief update requires knowledge of $\alpha$ and c. There is a rich literature of how to estimate the disease parameters ($\alpha$ and $c$) in this stage and these methods have been tested on real world scenarios~\cite{kirkeby2017methods,saad2016mathematical,dong2012modeling}. Here, we assume that such parameters are known. 

Such parameters can vary from disease to disease. For example, the transmission rate of Pertussis can be as high as $0.47$ for certain age groups in \cite{hethcote1997age} and as low as $0.035$ for Syphilis \cite{saad2016mathematical}. The cure rate also depends on how resourceful are the target regions. We fix the value of our parameter to a reasonable value $(\alpha, c)=(0.1,0.1)$ first for comparison and evaluate a wide range of values afterwards.

\vspace*{2mm}
\noindent\textbf{Stage 2 (ACTS Stage).}\hspace{2mm}Here, we consider various screening algorithms. We perform active screening from $t=11$ to $t=T=20$ to represent 5 years of time (each round is 6 months~\cite{cdc2}). Beliefs are updated according to the belief update scheme presented in Section~\ref{modelsection}.

\subsection{Metrics} In Table~\ref{resultstable}, we compare the outcomes of these screening strategies compared to no intervention (\textbf{None}) based on the total infected number over time. In \textbf{None}, the evolution of the health states is based on disease dynamics with no active screening for all $T$ timesteps. 

\vspace*{2mm}
\noindent\textbf{Comparison with Baselines.}\hspace{2mm}Given the lack of previous algorithms for our problem setting, we measure the performance of \textsc{REMEDY} against baselines: 
\begin{itemize}
\denselist
\item[(1a)] \textbf{Random}: Randomly select nodes for active screening. 
\item[(1b)] \textbf{MaxDegree}: Greedily choose nodes of the largest degree until the budget is reached. 
\item[(1c)] \textbf{Eigen}: Greedily choose nodes that reduce the largest eigenvalue of $A$ the most until the budget is reached. 
\item[(1d)] \textbf{MaxBelief}: Greedily choose nodes with maximum probability of being in the $I$ state. 
\end{itemize}

We test these algorithms on the following realistic contact networks collected from a diversity of sources and methods. The networks are carefully selected to have rich variety of densities, structures and sizes (ranging from $75$ to $16730$ nodes). 
\begin{itemize}
\denselist
\item[(2a)] \textbf{Hospital}~\cite{hospital}: A contact network collected in a university hospital in order to study path of disease spread. 
\item[(2b)] \textbf{India}~\cite{jackson}: A human contact network collected from a rural village in India where active screening with limited budget may take place.
\item[(2c)] \textbf{Face-to-face}~\cite{infectious}: A network describing face-to-face contact in which influenza might spread through the close contact of individuals.
\item[(2d)] \textbf{Flu}~\cite{stanfordflu}: This network captures close proximity interactions in an American high school. The network is highly-dense ($\lambda_A>300$) with small-world properties and a relatively homogeneous distribution. 
\item[(2e)] \textbf{Irvine}~\cite{panzarasa2009patterns}: A friendship network, which is a representative network used to study rumors modeled as epidemic spread.
\item[(2f)] \textbf{Escort}~\cite{escort}: A large sexual contacts between escorts and sex buyers collected for a six-year period, in which STD may be spread.
\end{itemize}
In Table~\ref{resultstable} higher numbers indicate a larger improvement against \textbf{None}. In most cases, both versions of \textsc{REMEDY} make substantial improvements over all baselines, and as expected \textsc{Full-REMEDY} has better solution quality than \textsc{Fast-REMEDY}. A typical number of infected node number over time result is shown in Fig.~\ref{evolve}.  In the \textbf{Hospital} and \textbf{Flu} network, due to the network size or homogeneity, respectively, it is difficult for any algorithm to provide any improvement from a random intervention. In the \textbf{Escort} network, which is the largest network analyzed, we assume that the size of the network makes any algorithm slower than $O(|V|^2)$ impractical.
However, \textsc{Fast-REMEDY} continues to perform better than the baselines that could be run. We examined the performance of \textsc{REMEDY} algorithm on a variety range of $\alpha$ and $c$. \textsc{Fast-} and 
\textsc{Full-REMEDY} continue to perform better than their closest competitor. (see Fig.~\ref{vary}, which shows selected scenarios).
\begin{figure}[ht]
\includegraphics[width=0.45\textwidth,keepaspectratio]{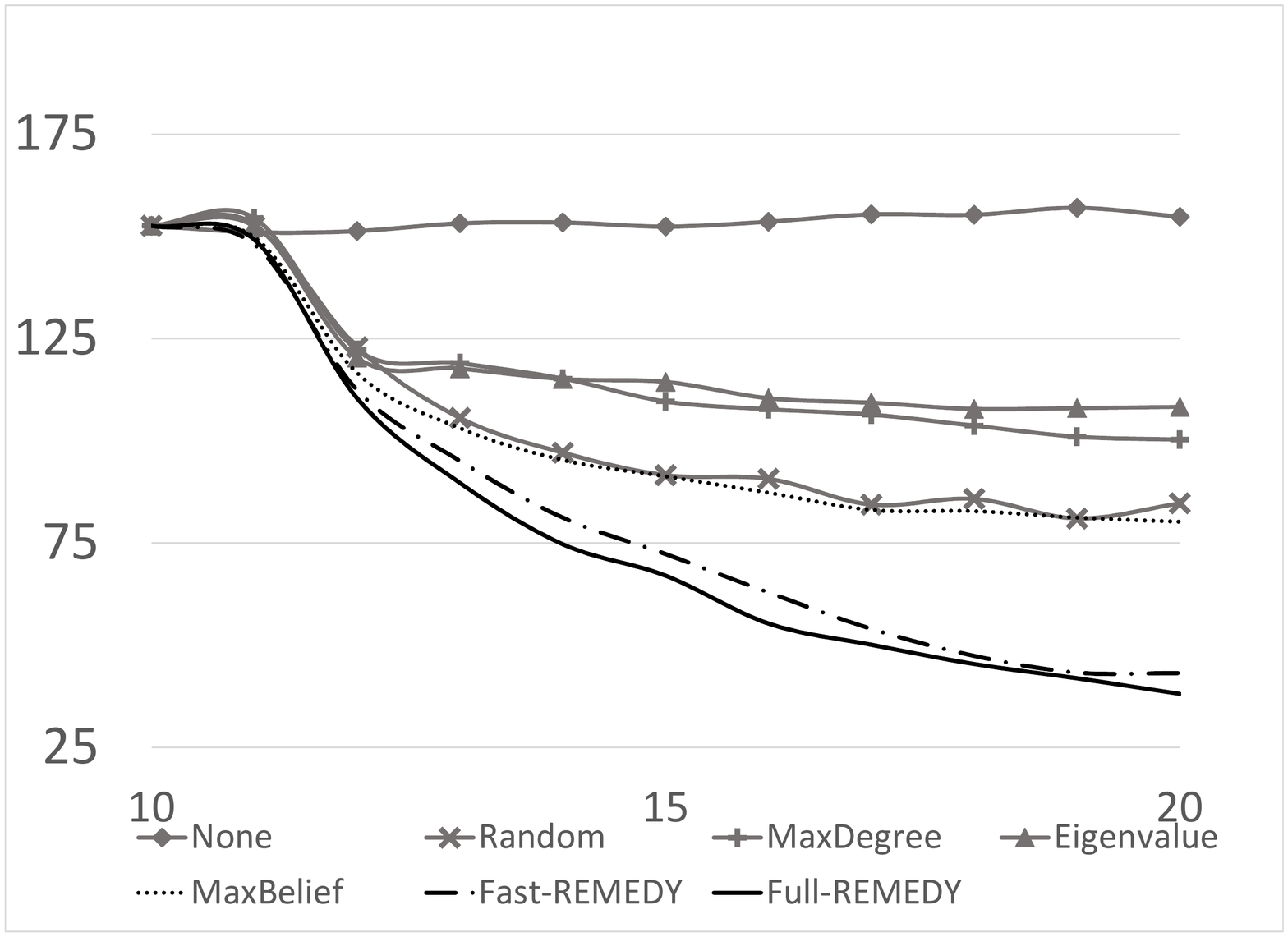}%
\centering
\caption{Number of infected nodes vs.\ time in \textbf{India} network.}\label{evolve}
\end{figure}
\begin{figure}[ht]
\centering
\null\hfill
\subfloat[c=0.1]{%
  \includegraphics[width=0.23\textwidth,keepaspectratio]{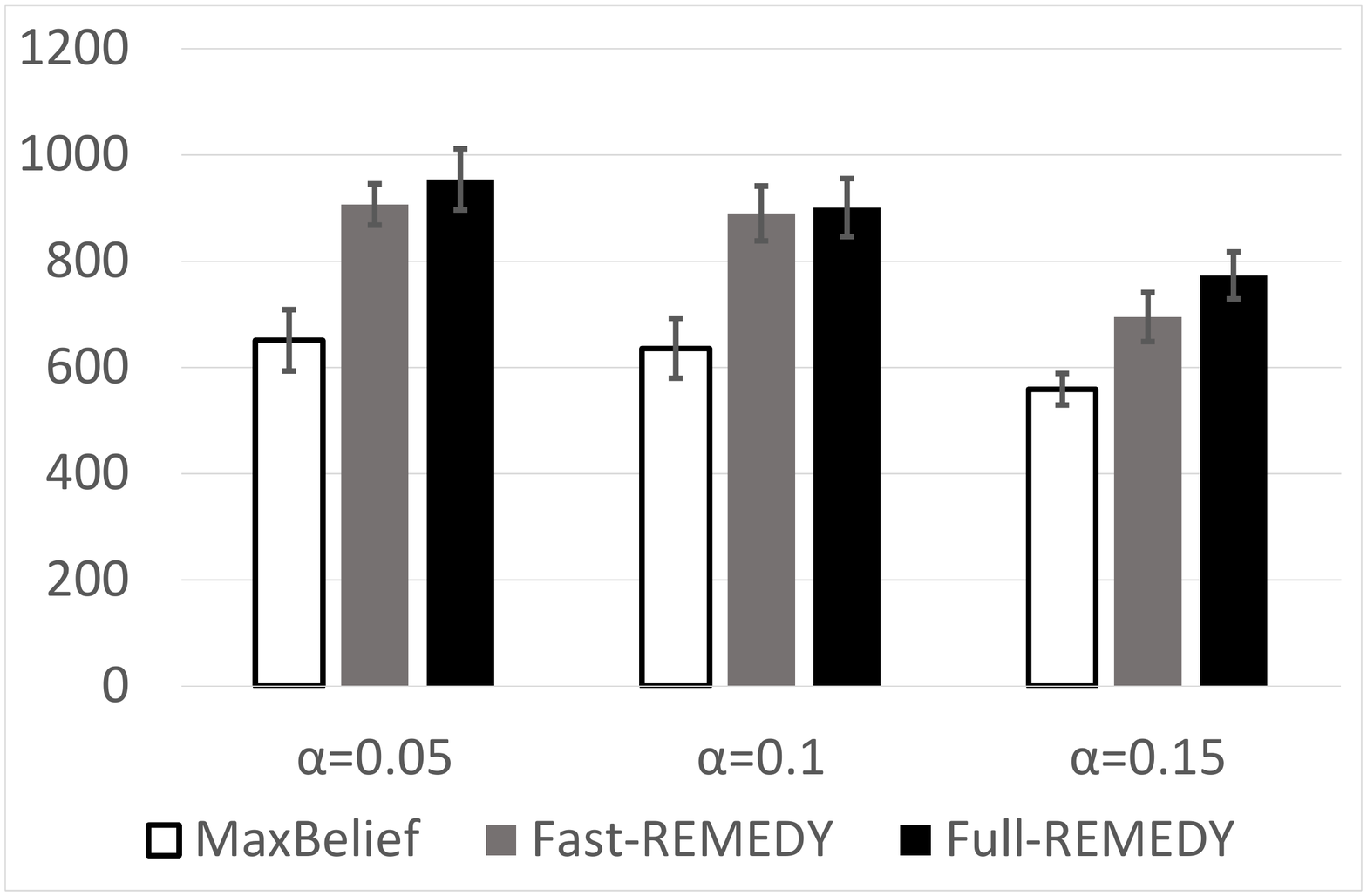}%
}\hfill
\subfloat[a=0.1]{%
  \includegraphics[width=0.23\textwidth,keepaspectratio]{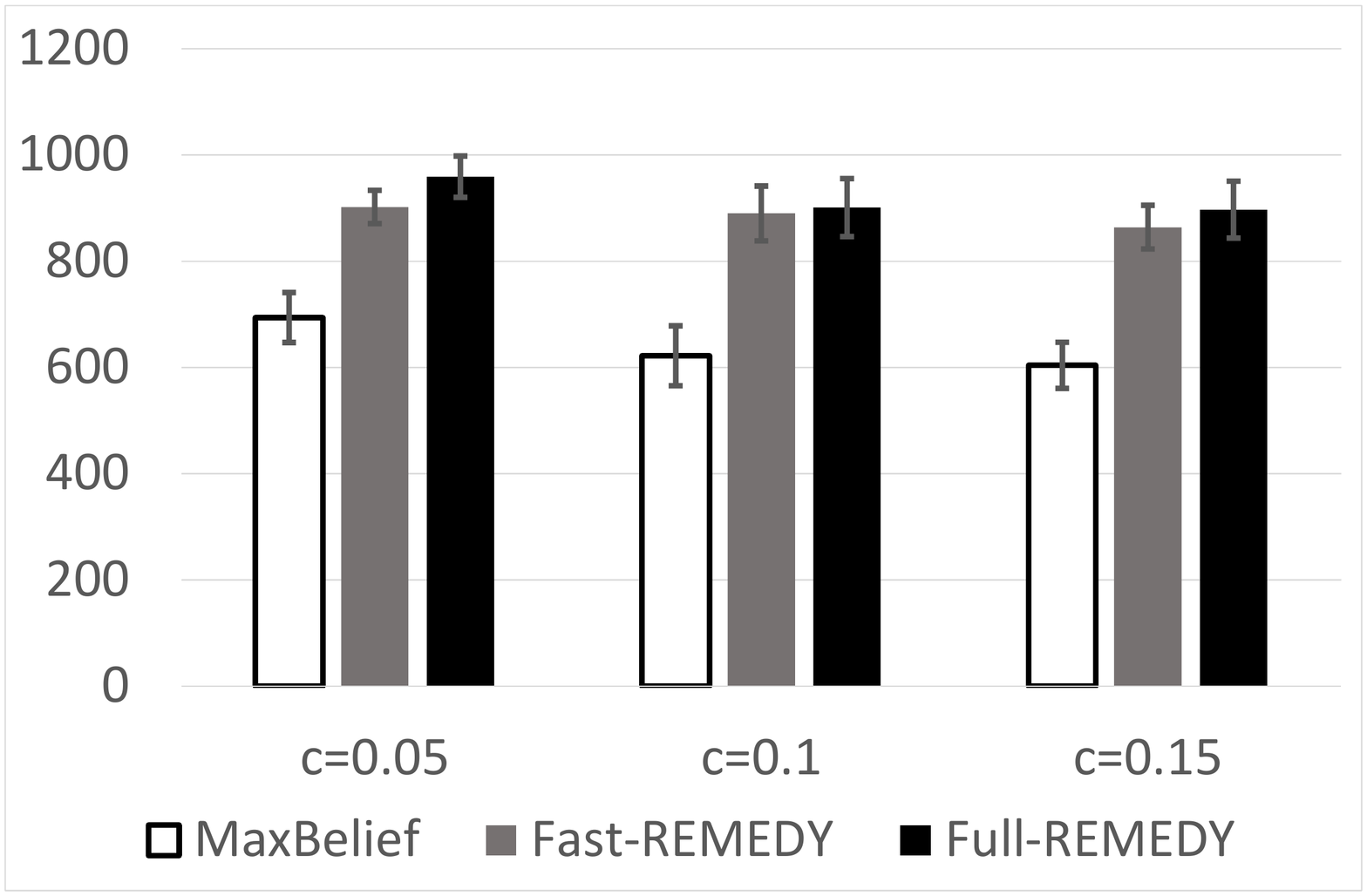}%
}\hfill\null
\centering
\caption{\textbf{Improvement} over \textbf{None} under different parameter sets for \textbf{India} network. Both \textsc{Full-REMEDY} and \textsc{Fast-REMEDY} outperform MaxBelief. \textsc{Full-REMEDY} outperforms \textsc{Fast-REMEDY} especially for larger $\alpha$.}\label{vary}
\end{figure}
\subsection{Impact of Budget}
Determining the improvement an intervention can achieve with various budgets is critical when informing health policy. We therefore find the improvement possible over different budget values for two realistically modeled diseases: influenza and syphilis.

\vspace*{2mm}
\noindent\textbf{Influenza.}\hspace{2mm}For influenza, we use the parameters that previous literature estimated through a continuous survey administered in a student residence hall community \cite{dong2012modeling}. The transition rate is estimated to be $\alpha=0.024$ per infectious neighbor and the self cure rate is estimated to be $c=0.3$. We test the algorithms on the \textbf{Face-to-face} network, since this network is used to study the dynamics of SIS-type epidemic spread in its original paper \cite{infectious}. 

Fig.~\ref{real} (a) shows that both \textsc{Fast-REMEDY} and \textsc{Full-REMEDY} outperform other baselines under realistic settings. The difference between algorithms grows larger as the budget increases. According to Prakash et al.~\cite{thres1}, such a network requires at least $k/n\geq \alpha \lambda_A=57\%$ for random screening to fully eradicate the disease. However, the epidemic dies out at the end of 20th round when \textsc{Full-REMEDY} is deployed with a budge of only $k/n=15\%$. 

\vspace*{2mm}
\noindent\textbf{Syphilis.}\hspace{2mm}Saad-Roy et al.~\cite{saad2016mathematical} provides empirically-derived syphilis parameters for our model. The natural cure rate is estimated to be $c=0.01$ and transmission rate $\alpha=0.035$. The network used here is the \textbf{Escort} network with $16730$ nodes, which is a STD contact network. Because the network is large, we show only the algorithms that does not exceed running time due to time complexity, which are Random, Max-Degree, Max-Belief and \textsc{Fast-REMEDY}. Fig.\ref{real} (b) shows that \textsc{Fast-REMEDY} achieves significantly better results than all other baselines. On average, it saves 1140, 2900, and 4600 people from becoming infected every six months for $5\%$, $10\%$ and $15\%$ budgets, respectively.
\begin{figure}[ht]
\centering
\null\hfill
\subfloat[Influenza]{%
  \includegraphics[width=0.26\textwidth,keepaspectratio]{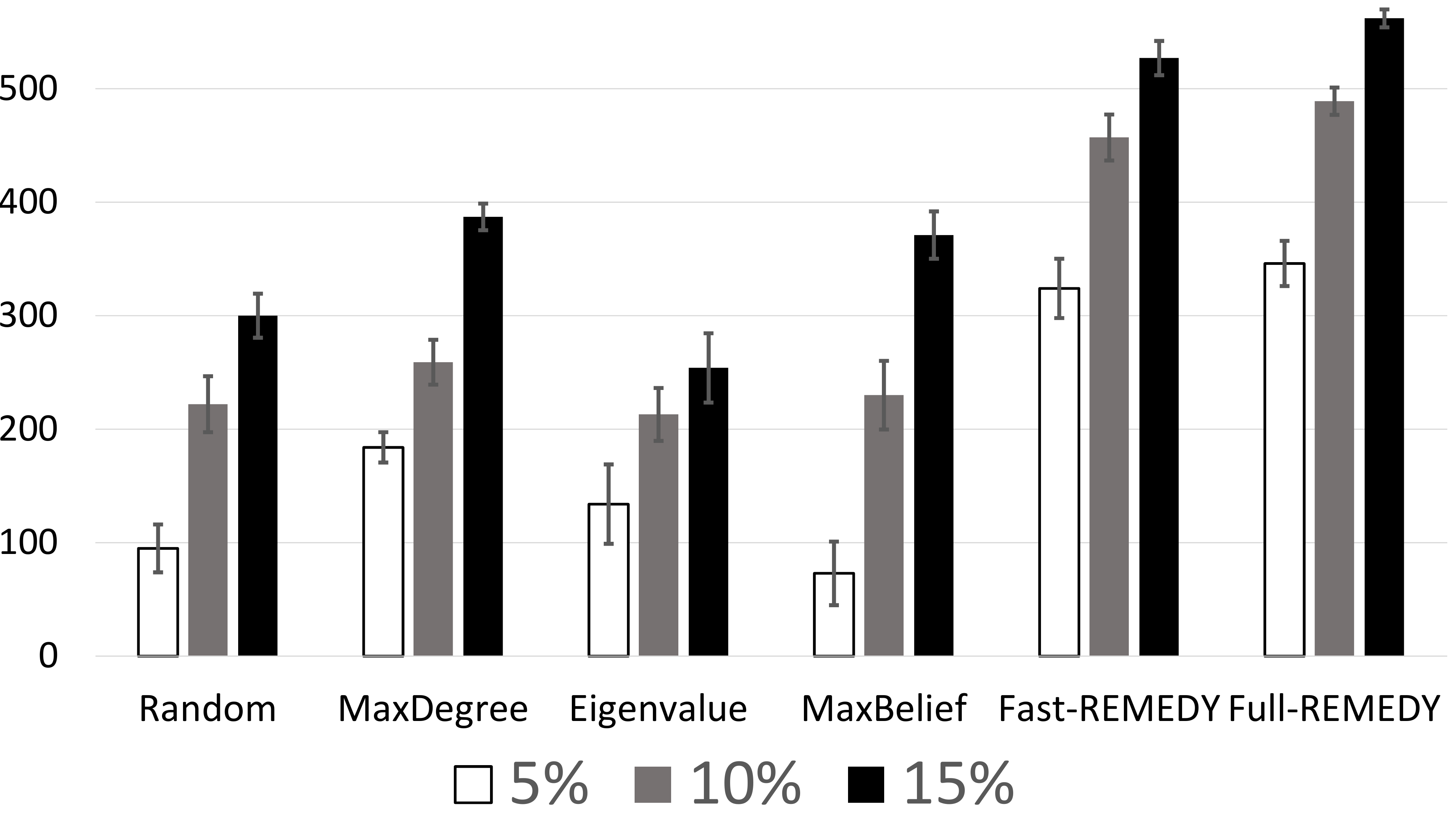}%
}\hfill
\subfloat[Syphilis]{%
  \includegraphics[width=0.20\textwidth,keepaspectratio]{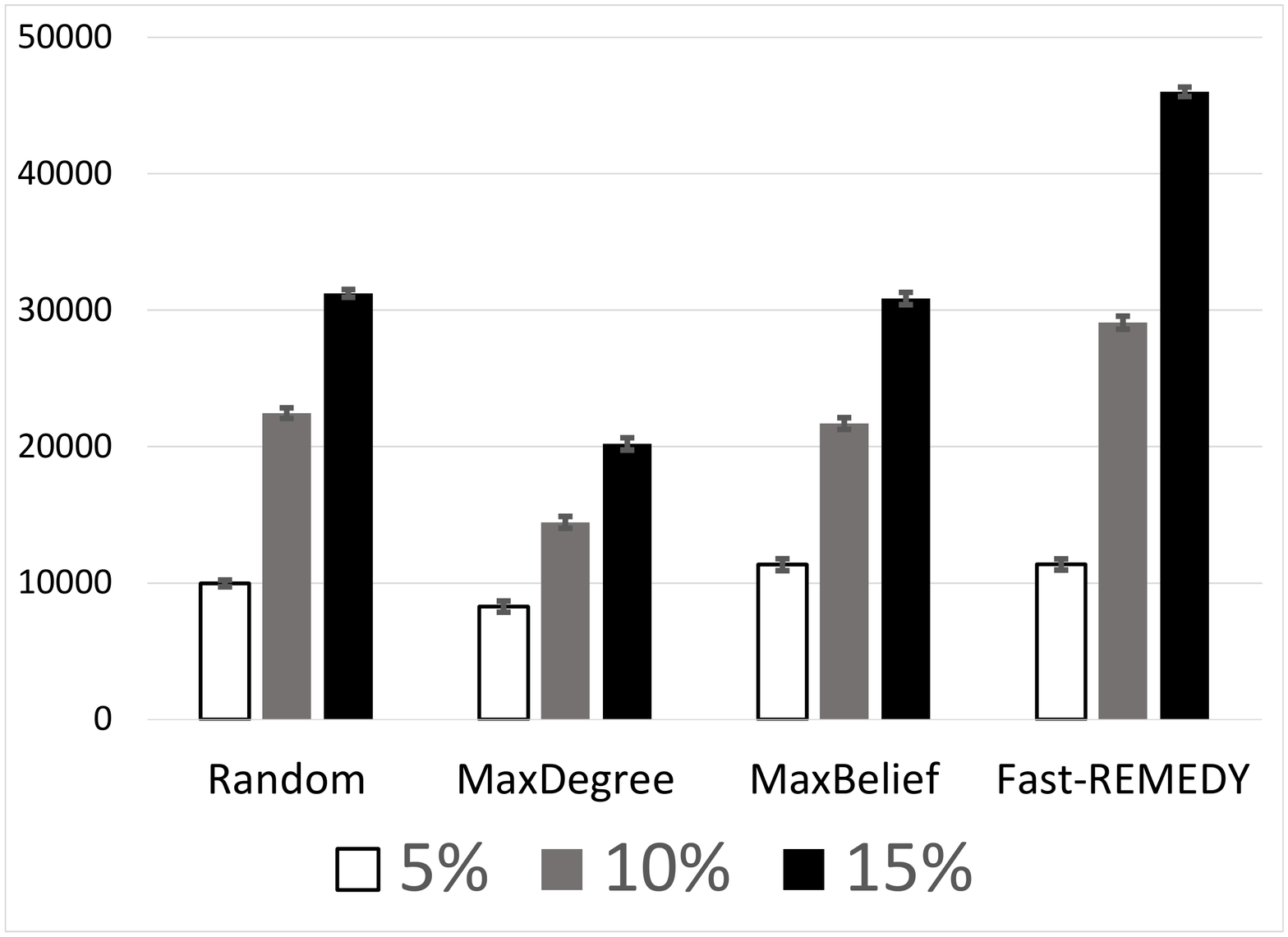}%
}\hfill\null
\centering
\caption{\textbf{Improvement} over \textbf{None} under specific disease parameter of different budget (5\%, 10\%, 15\% of total population respectively).}\label{real}
\end{figure}



\section{Conclusion}
We proposed a novel active screening model (ACTS) to facilitate multi-round active screening problem of SIS recurrent diseases with network structure. We introduced two algorithms, \textsc{Full-REMEDY} and \textsc{Fast-REMEDY} with solution quality and time complexity trade-off and tested on various realistic disease to show their effectiveness.
\bibliographystyle{unsrt} 
\bibliography{template}

\end{document}